\theoremstyle{plain}% Theorem-like structures provided by amsthm.sty
\newtheorem{theorem}{Theorem}[section]
\newtheorem{lemma}[theorem]{Lemma}
\theoremstyle{definition}
\theoremstyle{remark}
\newcommand\norm[1]{\left\lVert#1\right\rVert}
\newcommand\blfootnote[1]{
    \begingroup
    \renewcommand\thefootnote{}\footnote{#1}
    \addtocounter{footnote}{-1}
    \endgroup
}
\title{\huge{Online input design for discrimination of linear models using concave minimization}}
\author{Jacques Noom\textsuperscript{a}, Oleg Soloviev\textsuperscript{a,b}, Carlas Smith\textsuperscript{a} and Michel Verhaegen\textsuperscript{a}\\
\textsuperscript{a}Delft Center for Systems and Control, Delft University of Technology, \\2628CN Delft, The Netherlands; \\\textsuperscript{b}Flexible Optical BV, 2288GG Rijswijk, The Netherlands}
\date{}
\begin{document}

\maketitle

\begin{abstract}
    \textbf{Stochastic Closed-Loop Active Fault Diagnosis (CLAFD) aims to select the input sequentially in order to improve the discrimination of different models by minimizing the predicted error probability. As computation of these error probabilities encompasses the evaluation of multidimensional probability integrals, relaxation methods are of interest. This manuscript presents a new method that allows to make an improved trade-off between three factors -- namely maximized accuracy of diagnosis, minimized number of consecutive measurements to achieve that accuracy, and minimized computational effort per time step -- with respect to the state-of-the-art. It relies on minimizing an upper bound on the error probability, which is in the case of linear models with Gaussian noise proven to be concave in the most challenging discrimination conditions.
    A simulation study is conducted both for open-loop and feedback controlled candidate models. The results demonstrate the favorable trade-off using the new contributions in this manuscript.}
\end{abstract}
\blfootnote{This project has received funding from the ECSEL Joint Undertaking (JU) under grant agreement No 826589. The JU receives support from the European Union’s Horizon 2020 research and innovation programme and Netherlands, Belgium, Germany, France, Italy, Austria, Hungary, Romania, Sweden and Israel.}
\blfootnote{Compiled \today.}

\section{Introduction}
Fault diagnosis is crucial for automation. Widespread applications rise from recognizing faults in dynamical systems \cite{Gao2015} to automatic classification of images \cite{Rawat2017,Noom2022}. Moreover, Industry 4.0 aims at fully automated, smart factories where (among other things) ample data are turned into automatic actions and decisions. The motivation of such actions and decisions lies in model diagnosis. It is therefore critical that the diagnosis is efficient and reliable.

Passive approaches of fault diagnosis have the potential to overlook faults, as complex or feedback-controlled systems may generate nominal input-output data while not being fault-free \cite{Heirung2019}.
Active fault diagnosis overcomes the shortcomings of passive diagnosis by applying an input designed such that complex models can still be discriminated. Online computation of the discriminating inputs further increases the efficiency by considering most recent measurements. Such closed-loop implementations receive increased interest for their decreased conservatism and accelerated diagnosis, yet are not widespread use in automation \cite{Heirung2019}.

Although the idea of Closed-Loop Active Fault Diagnosis (CLAFD) exists for several decades in static \cite{Hunter1965,Box1967} and dynamical systems \cite{Zhang1988}, the developments were held back due to its computational challenges \cite{Heirung2019}. Moreover, the trade-off consists of three factors. First, the accuracy of diagnosis should be maximized. Second, the system should be diagnosed within a minimized number of consecutive measurements. Third, the inputs should be calculated fast enough so they can be applied to the system without delay.
This implication between performance (first two factors) and computational efficiency (third factor) is still a major bottleneck.

Existing efforts can be separated in deterministic and stochastic approaches \cite{Puncochar2018}, which are both computationally challenging. Assuming bounded uncertainties, a deterministic approach facilitates guaranteed diagnosis of the correct model (e.g. \cite{Yang2021}). This problem is computationally challenging due to the nonconvex constraints on the input such that the system output is exclusive \cite{Raimondo2016}. Alternatively, stochastic approaches assume known Probability Density Functions (PDFs) for unbounded uncertainties. Instead of anticipating the worst-case scenario for guaranteed diagnosis, the goal of stochastic approaches is to minimize the probability of misdiagnosis.
This is computationally challenging because of the (online) evaluation of multidimensional probability integrals to determine the error probability.
An illustration of the enormous computational challenge here is reported in \cite{Puncochar2015}, where even when restricting the input to a small discrete set of three elements and three second order candidate models, the computational time for determining the input policy was 7.5 hours \cite{Puncochar2015}.
Yet, stochastic approaches are generally less intrusive than deterministic approaches \cite{Scott2013}. Besides, it is in practice often difficult to define explicit bounds on signals, which makes it more natural to formulate CLAFD as a stochastic optimization problem \cite{Paulson2018,Noom2021}.

To overcome the tremendous computational burden of stochastic approaches, a widely accepted solution is to optimize an upper bound, such as the sum of weighted Bhattacharyya coefficients \cite{Boekee1979}. Using sequential quadratic programming and restricting to open-loop (batch-wise) input determination, this has been studied in \cite{Blackmore2006}. Due to the remaining high computational complexity, closed-loop (receding-horizon) implementation is still unattractive. \cite{Noom2021} restricted to polytopic constraints on the input in an attempt to implement it efficiently in closed-loop. This approach however could not guarantee a solution that is optimal with respect to the upper bound. 

As an alternative for overcoming the real-time bottleneck, a further simplification is proposed in \cite{Paulson2018}, where just the convex part of the Bhattacharyya coefficient is employed. The so-called Bhattacharyya \emph{distance} was maximized in an attempt to discriminate the candidate models. Although this resulted in a fast computational solution, serious drawbacks are introduced. First, the function used in \cite{Paulson2018} is by no means an approximation to the upper bound on the error probability. Consequently, there exist conditions where the performance degrades substantially, such as the case of multiple groups of close candidate models. Second, \cite{Paulson2018} only considered polytopic constraints on the input in order to determine the maximum of the simplified function.

The main contribution of this manuscript is as follows. First, for linear systems with Gaussian disturbances it is shown that the upper bound on the error probability is concave in a subdomain of the input space. That subdomain is characterized by the interesting case when discrimination between candidate models is challenging in terms of noise and proximity of the candidate models. Second, a real-time online check for being in this subdomain is formulated both for the case of a polytopic and quadratic constraint set. Third, we propose to solve the minimization problem using Disciplined Convex-Concave Programming (DCCP) \cite{Shen2016}, extending the ability to implement the closed-loop diagnosis procedure to a broad spectrum of convex constraints on the input. Altogether, this overcomes the two main drawbacks of the approach in \cite{Paulson2018}. For further speedup of the computations without unnecessarily sacrificing the performance, a quadratic Taylor approximation of the upper bound on the error probability is proposed. The differences of the proposed approaches with that in \cite{Paulson2018} are verified in a first Monte-Carlo simulation, after which a second Monte-Carlo simulation aims at comparing the performances of the approaches extended by DCCP. The final simulation shows the applicability of the proposed approaches to feedback controlled systems.

The manuscript is organized as follows. First, the stochastic diagnosis problem is defined in Sect. \ref{probform}. Then, the practical implementation is presented in Sect. \ref{methodology}, together with the derivation of domain of concavity and the quadratic Taylor approximation. Sect. \ref{simulation} presents simulation results for the proposed closed-loop methods, along with the state-of-the-art closed-loop method based on Bhattacharyya distances and an open-loop approach. Lastly, Sect. \ref{conclusion} states the conclusions.

\section{Problem Formulation} \label{probform}
\subsection{Notation}
In consistency with \cite{Paulson2018,Noom2021}, we will use the following notation. The expression $x|y$ denotes the random variable $x$ conditioned on $y$, $\hat{x}_{k|k-n}$ is the estimate of $x_k$ based on knowledge up until time step $k-n$, the notation $x_{k:k+n} = \begin{bmatrix} x_k^\top & x_{k+1}^\top & \hdots & x_{k+n}^\top \end{bmatrix}^\top$, and we define boldface $\mathbf{x}_k = x_{k+1:k+N}$ with $N$ the horizon length. The notation $P_k(\text{event}) = \mathbb{P}(\text{event at time step }k)$ with $\mathbb{P}$ the probability operator, and the expectation operator $\mathbb{E}[\cdot]$ works on the stochastic output data $y_k$.

\subsection{Bayesian formulation}
Regard the linear candidate models of a system:
\begin{equation}
M^{[i]}: \left\{ \begin{array}{rl} x_{k+1} &= A^{[i]} x_{k} + B^{[i]} u_k + w_k\\
y_k &= C^{[i]} x_{k}  +  v_k \end{array}\right. 
\label{models}
\end{equation}
with $i = \{0,1,\hdots,n_m-1\}$ the model indicator, $A^{[i]},B^{[i]},C^{[i]}$ the state space matrices and $x_k \in \mathbb{R}^{n_x}$, $u_k \in \mathbb{R}^{n_u}$, $y_k\in \mathbb{R}^{n_y}$ the state, input and output. Given the joint covariance matrix of the process noise $w_k$ and measurement noise $v_k$
\begin{equation}
    \mathbb{E}\left[\begin{bmatrix}
        v_k \\ w_k
    \end{bmatrix} \begin{bmatrix}
        v_\ell^\top & w_\ell^\top
    \end{bmatrix}\right] = 
    \left\{ \begin{array}{ll} \begin{bmatrix}
        R & S^\top \\ S & Q
    \end{bmatrix} & \text{if } k = \ell, \\
    0 & \text{otherwise,}
    \end{array} \right.
    \label{modelsnoise}
\end{equation}
the optimal state prediction $\hat{x}_{k+1|k}^{[i]}$ can be obtained together with its corresponding covariance matrix $\Xi_{k+1|k}^{[i]}$ using the Kalman filter as described in for instance \cite{Verhaegen2007}. Stability of the candidate models is not required. Nevertheless, feedback controlled systems can be captured in \eqref{models} and \eqref{modelsnoise} as described in Appendix \ref{FBC}.

The hypothesis probabilities evolve according to the Bayesian update rule
\begin{equation}
    \begin{aligned}
    &P_{k+N}\left(M^{[i]}\big|P_k(M^{[i]}),\hat{x}_{k+1|k}^{[0:n_m-1]},\mathbf{y}_{k},\mathbf{u}_k\right) \\ &\qquad\qquad =\frac{p\left(\mathbf{y}_{k}\big|M^{[i]},\hat{x}_{k+1|k}^{[i]},\mathbf{u}_{k}\right) P_{k}(M^{[i]})}{p\left(\mathbf{y}_{k}\big|\hat{x}_{k+1|k}^{[0:n_m-1]},\mathbf{u}_{k}\right)},
    \end{aligned}
    \label{probs}
\end{equation}
where $p\left(\mathbf{y}_{k}\big|M^{[i]},\hat{x}_{k+1|k}^{[i]},\mathbf{u}_{k}\right) \in \mathbb{R}_+$ is the probability density function (PDF) of output $\mathbf{y}_{k}$, conditioned on hypothesis $M^{[i]}$, state estimate $\hat{x}_{k+1|k}^{[i]}$ and input $\mathbf{u}_{k}$. The initial conditions $P_{0}(M^{[i]})$ can be set to any prior probabilities. Observe that in this notation the left-hand side of \eqref{probs} is conditioned on another probability, which is convenient for the reason that the model probabilities $P_k(M^{[0:n_m-1]})$ combined with system state estimates $\hat{x}_{k+1|k}^{[0:n_m-1]}$ fully describe the so-called belief state of the partially observable Markov decision process.

If one chooses the most likely model
 \begin{equation}
     M^{[i]} \; : \; i = \arg \max_i \Big(P_{k}(M^{[i]}) \Big)
 \end{equation}
 based on all knowledge up until time step $k$, then the probability of misdiagnosis would be
\begin{equation}
P_{k}\Big(\text{error}\big|P_{k}(M^{[0:n_m-1]}) \Big) =  1 - \max_i \Big(P_{k}(M^{[i]}) \Big).
\label{perror1}
\end{equation}
The \emph{future} (unknown) probability of misdiagnosis after next measurements $\mathbf{y}_{k}$ also depends on future inputs and outputs:
\begin{equation}
\begin{aligned}
&P_{k+N}\left(\text{error}\big|P_{k+N}(M^{[0:n_m-1]}) \right)\\
&\quad = P_{k+N}\left(\text{error}\big|P_{k}(M^{[0:n_m-1]}),\hat{x}_{k+1|k}^{[0:n_m-1]},\mathbf{y}_{k},\mathbf{u}_k \right)\\
&\quad =  1 - \max_i \left(P_{k+N}\left(M^{[i]}\big|P_k(M^{[i]}),\hat{x}_{k+1|k}^{[0:n_m-1]},\mathbf{y}_{k},\mathbf{u}_k\right) \right).
\end{aligned}
\label{perror2}
\end{equation}
Obviously, it is desirable to minimize this future error probability. For that purpose, the stochastic problem of CLAFD is based on the expected value of the error probability. Besides, one could imagine that the optimal input sequence for minimizing the error probability changes when more measurements become available. This is covered in the general stochastic formulation of the CLAFD problem, as presented in next section.

\subsection{Infinite horizon stochastic control problem}
In accordance with \cite{Bertsekas1978}, the stochastic control problem of CLAFD is generally formulated as
\begin{equation}
\begin{aligned}
\min_\pi\; &\lim_{N \rightarrow \infty} \sum_{k=0}^{N-1} \mathbb{E} \left[P_{k}\Big(\text{error}\big|s_k \Big) \Big| s_0\right]\\
\text{s.t. } &\pi \in \Pi\\
&s_{k+1} = f[s_k,\mu_k(s_k),\mathrm{w}_k].
\end{aligned}
\label{stoc}
\end{equation}
Here, the input policy $\pi = (\mu_0,\hdots,\mu_{N-1})$ defines the input with the functions
\[
u_k = \mu_k(s_k),
\]
constrained to the set $\Pi$. The function $f[s_k,\mu_k(s_k),\mathrm{w}_k]$ describes the system dynamics as function of state $s_k$, input function $\mu_k(s_k)$ and $\mathrm{w}_k$ the stochastic disturbance. Similarly to \cite{Puncochar2015}, the hyperstate $s_k$ should consist of the state estimates for each candidate model $\hat{x}_{k+1|k}^{[i]}$ and the model probabilities $P_k(M^{[i]})$. The variable $\mathrm{w}_k$ consists of the noise contributions $w_k$ and $v_k$.

The expected value of the error probability in \eqref{stoc} is also known as the \emph{risk} in a Bayes classifier (see e.g. \cite{Rish2001}). When considering for the case in this manuscript that the estimated states and covariance matrices are available from \eqref{models}, the expected value acts as a predictor of the error probability in \eqref{perror2}. In other words,
\begin{equation}
    \begin{aligned}
    &\hat{P}_{k+N|k}\left(\text{error}\big|\mathbf{u}_k\right) \\
    &\quad =\mathbb{E}\left[P_{k+N}\left(\text{error}\big|P_{k}(M^{[0:n_m-1]}),\hat{x}_{k+1|k}^{[0:n_m-1]},\mathbf{y}_{k},\mathbf{u}_k \right) \right]\\
    &\quad = P_{k+N}\left(\text{error}\big|P_{k}(M^{[0:n_m-1]}),\hat{x}_{k+1|k}^{[0:n_m-1]},\mathbf{u}_k \right).
    \end{aligned}
\end{equation}
Note that all sides of this equation are independent of the future system outputs $\mathbf{y}_{k}$, and therefore it is a deterministic function of $\mathbf{u}_k$ only.

\section{Methodology} \label{methodology}
\subsection{Model predictive control}
A generally accepted strategy for solving an infinite horizon stochastic control problem is to make use of a receding horizon approach, most commonly known as Model Predictive Control (MPC) \cite{Mayne2000}. This technique solves a finite horizon control problem each time step, and applies only the first input to the system. The $N$-step receding-horizon approximation of \eqref{stoc} is
\begin{equation}
\begin{aligned}
\mathbf{u}_k^* = \arg \min_{\mathbf{u}_k \in \mathcal{U}} &\sum_{\ell = k+2}^{k+N} \hat{P}_{\ell|k}\left(\text{error}\big|u_{k+1:\ell-1}\right)
\end{aligned}
\label{optprob1}
\end{equation}
with $\mathcal{U}$ the constraint set on vectorised inputs $\mathbf{u}_k$.

\subsection{Bound on predicted error probability}
Since the prediction of the error probability requires large computational effort, a divergence measure will be used instead. The Bhattacharyya coefficient is a symmetric measure which provides an upper bound on the predicted error probability \cite{Boekee1979}:
\begin{equation}
\hat{P}_{k+N|k}\left(\text{error}\big|\mathbf{u}_k\right) \leq \sum_i \sum_{j>i} \sqrt{P_{k}(M^{[i]}) P_{k}(M^{[j]})} \mathfrak{B}^{[ij]}(\mathbf{u}_k)
\label{bound}
\end{equation}
with the Bhattacharyya coefficient in case of Gaussian process and measurement noise
\begin{equation}
\mathfrak{B}^{[ij]}(\mathbf{u}_k) = \exp(-d^{[ij]}(\mathbf{u}_k)).
\end{equation}
The Bhattacharyya distance 
\begin{equation}
d^{[ij]}(\mathbf{u}_k) =  \mathbf{u}_k^\top H^{[ij]} \mathbf{u}_k +  \mathbf{u}_k^\top c^{[ij]} + h^{[ij]}
\label{bdist}
\end{equation}
 is a convex quadratic function where \cite{Paulson2018}:
 \begin{equation}
     \begin{aligned}
     H^{[ij]} &= \frac{1}{4}(\Gamma^{[ij]})^\top (\Omega^{[ij]})^{-1} \Gamma^{[ij]}\\
     c^{[ij]} &= \frac{1}{2} (\Gamma^{[ij]})^\top (\Omega^{[ij]})^{-1} \zeta^{[ij]}
     \end{aligned}
     \label{Hc}
 \end{equation}
 and
  \begin{equation}
     \begin{aligned}
     \Omega^{[ij]} &= \mathbf{\Sigma}_{k|k}^{[i]} + \mathbf{\Sigma}_{k|k}^{[j]}\\
     \Gamma^{[ij]} &= \mathbf{C}^{[i]}\mathcal{T}_{A^{[i]}} \mathbf{B}^{[i]} - \mathbf{C}^{[j]}\mathcal{T}_{A^{[j]}}\mathbf{B}^{[j]} \\
     \zeta^{[ij]} &= \mathbf{C}^{[i]}\mathbf{A}^{[i]} \hat{x}_{k+1|k}^{[i]} - \mathbf{C}^{[j]}\mathbf{A}^{[j]} \hat{x}_{k+1|k}^{[j]}
     \end{aligned}
     \label{OmGamZet}
 \end{equation}
 with the matrices $\mathbf{A}^{[i]},\mathbf{B}^{[i]},\mathbf{C}^{[i]}$ and Toeplitz matrix $\mathcal{T}_{A^{[i]}}$ constructed from the state-space matrices in \eqref{models}, and $\mathbf{\Sigma}_{k|k}^{[i]}$ are the covariance matrices of the estimates of the output $\hat{\mathbf{y}}_{k|k}^{[i]}$. The full definitions together with that of $h^{[ij]}$ are given in Appendix \ref{definitions}.

Given the bound in \eqref{bound}, the relaxed MPC problem reads
\begin{equation}
\begin{aligned}
\mathbf{u}_k^* = \arg \hspace{-0.04in} \min_{\mathbf{u}_k \in \mathcal{U}} \sum_{\ell = k+2}^{k+N}\hspace{-0.05in}\sum_i \hspace{-0.02in}\sum_{j>i} \hspace{-0.03in}\sqrt{P_{k}(M^{[i]}) P_{k}(M^{[j]})} \mathfrak{B}^{[ij]}(u_{k+1:\ell-1}).
\end{aligned}
\label{optprob}
\end{equation}
Since the objective function \eqref{optprob} is still non-convex, efforts have been made to simplify it further. The approach in \cite{Noom2021} aims to solve 
\begin{equation}
\begin{aligned}
\mathbf{u}_k^* = \arg \min_{\mathbf{u}_k \in \mathcal{U}} \sum_i \sum_{j>i} \sqrt{P_{k}(M^{[i]}) P_{k}(M^{[j]})} \mathfrak{B}^{[ij]}(\mathbf{u}_{k})
\end{aligned}
\label{optprob2}
\end{equation}
instead, using only the error bound for time step $k+N$. Furthermore,  \cite{Paulson2018} simplifies problem \eqref{optprob2} to 
\begin{equation}
\begin{aligned}
\mathbf{u}_k^* = \arg \min_{\mathbf{u}_k \in \mathcal{U}} \sum_i \sum_{j>i} - d^{[ij]}(\mathbf{u}_k)
\end{aligned}
\label{paulson}
\end{equation}
in which only the Bhattacharyya distances $d^{[ij]}(\mathbf{u}_k)$ are considered. Its performance is demonstrated on a problem with a polytopic constraint set, such that the minimum of this concave function lies at one of the vertices of this set. Although the computational effort is relatively small, the solution can differ significantly from the optimum of \eqref{optprob2}. To improve the optimization outcome, \cite{Noom2021} proposes to evaluate the objective function in \eqref{optprob2} only at the vertices of the constraint set. Still, the global optimum of \eqref{optprob2} is not guaranteed using this approach.

The current manuscript follows the observation that \eqref{optprob} is in closed-loop diagnosis -- with challenging discrimination conditions -- often concave within the given constraint set. The next sections elaborate on verification of concavity and show how this concave problem can be minimized for polytopic constraints and energy constraints. Besides, Sect. \ref{taylorsect} presents an improved quadratic approximation w.r.t. \eqref{paulson}, which can be used if regular concave minimization is still too demanding with respect to quadratic concave minimization.

\subsection{Domain of concavity of the Bhattacharyya coefficient}\label{bhatdomain}
Note that in case of Gaussian process and measurement noise, the Bhattacharyya coefficient is a Gaussian function
\begin{equation}
\mathfrak{B}(\mathbf{u}_k) = \exp \left( - \mathbf{u}_k^\top H \mathbf{u}_k - c^\top \mathbf{u}_k - h \right)
\label{firstb}
\end{equation}
with $H$ a positive semi-definite, symmetric matrix (the indices $ij$ are omitted for clarity) which can be partitioned using the singular value decomposition (SVD)
\begin{equation}
H = \begin{bmatrix}
U_1 & U_2
\end{bmatrix}
\begin{bmatrix}
\Lambda_1 & 0 \\ 0 & 0
\end{bmatrix}
\begin{bmatrix}
U_1^\top \\ U_2^\top
\end{bmatrix}.
\end{equation}
The domain where the Bhattacharyya coefficient is concave, is provided in the following lemma.

\begin{lemma}[Domain of concavity of a multivariate Gaussian]\label{domconcavity}
Expression \eqref{firstb} is concave in the domain where
\begin{equation}
 \mathbf{u}_k^\top H \mathbf{u}_k + c^\top \mathbf{u}_k + \frac{1}{4}c^\top U_1 \Lambda_1^{-1} U_1^\top c \leq \frac{1}{2}
 \label{conc}
\end{equation}
is satisfied.
\end{lemma}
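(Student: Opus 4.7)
My plan is to show concavity by examining the Hessian of $\mathfrak{B}$ and reducing the resulting matrix inequality to the scalar inequality \eqref{conc} via the SVD partition.

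First, I would write $\mathfrak{B}(\mathbf{u}_k) = e^{g(\mathbf{u}_k)}$ with $g(\mathbf{u}_k) = -\mathbf{u}_k^\top H \mathbf{u}_k - c^\top \mathbf{u}_k - h$, so that the chain rule yields
\begin{equation*}
\nabla^2 \mathfrak{B}(\mathbf{u}_k) = e^{g(\mathbf{u}_k)}\bigl[(2H\mathbf{u}_k + c)(2H\mathbf{u}_k+c)^\top - 2H\bigr].
\end{equation*}
Since $e^{g} > 0$, concavity is equivalent to the matrix inequality $v v^\top \preceq 2H$ with $v := 2H\mathbf{u}_k + c$.

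Next, I would exploit the structure of $H$ and $c$ from \eqref{Hc}. Because $H = \tfrac14 \Gamma^\top \Omega^{-1}\Gamma$ and $c = \tfrac12 \Gamma^\top \Omega^{-1}\zeta$ with $\Omega \succ 0$, both $H\mathbf{u}_k$ and $c$ lie in the range of $\Gamma^\top$, which coincides with the range of $H = U_1\Lambda_1 U_1^\top$. Hence $v \in \mathrm{range}(U_1)$, so $v = U_1 w$ with
\begin{equation*}
w = 2\Lambda_1 U_1^\top \mathbf{u}_k + U_1^\top c.
\end{equation*}
For any $x$ written as $x = U_1\alpha + U_2\beta$, both $v^\top x = w^\top \alpha$ and $x^\top H x = \alpha^\top \Lambda_1 \alpha$ depend only on $\alpha$, so the inequality $v v^\top \preceq 2H$ reduces to $(w^\top \alpha)^2 \le 2\alpha^\top \Lambda_1 \alpha$ for all $\alpha$. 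By Cauchy--Schwarz with the $\Lambda_1$-weighted inner product, the supremum of the left-hand side over unit-$\Lambda_1$-norm $\alpha$ is $w^\top \Lambda_1^{-1} w$, attained at $\alpha = \Lambda_1^{-1} w$. Therefore the Hessian is negative semidefinite if and only if
\begin{equation*}
w^\top \Lambda_1^{-1} w \le 2.
\end{equation*}

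Finally, I would expand $w^\top\Lambda_1^{-1}w$ and use the fact that $U_1 U_1^\top c = c$ (since $c$ lies in the range of $U_1$, as argued above):
\begin{equation*}
w^\top \Lambda_1^{-1} w = 4\mathbf{u}_k^\top H \mathbf{u}_k + 4\mathbf{u}_k^\top c + c^\top U_1 \Lambda_1^{-1} U_1^\top c,
\end{equation*}
so dividing by $4$ converts $w^\top \Lambda_1^{-1} w \le 2$ into exactly \eqref{conc}. The main obstacle is the careful handling of the null space of $H$: one must verify that $c$ has no component in $\mathrm{range}(U_2)$, otherwise the rank-one update could never be dominated by the singular $H$ and the lemma would fail on that subspace; this is where the explicit parameterization of $c$ through $\Gamma^\top$ is indispensable.
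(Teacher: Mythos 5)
Your proof is correct, and it reaches \eqref{conc} by a genuinely different route from the paper. The paper's proof changes variables to $g = \Lambda_1^{1/2} U_1^\top \mathbf{u}_k + \tfrac{1}{2}\Lambda_1^{-1/2}U_1^\top c$ so that $\mathfrak{B} = a\exp(-\rho^2)$ with $\rho^2 = g^\top g$ equal to the left-hand side of \eqref{conc}, and then reads off concavity from the sign of the second derivative of $a e^{-\rho^2}$ with respect to the radius $\rho$. You instead compute the Hessian $e^{g(\mathbf{u}_k)}\bigl[vv^\top - 2H\bigr]$ in the original coordinates and characterize its negative semidefiniteness by restricting the rank-one domination $vv^\top \preceq 2H$ to $\mathrm{range}(U_1)$ and applying Cauchy--Schwarz in the $\Lambda_1$-weighted inner product. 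Your version buys three things the paper leaves implicit: it isolates and justifies from \eqref{Hc} the fact that $c \in \mathrm{range}(U_1)$, i.e. $U_1 U_1^\top c = c$ --- a condition the paper also needs silently, both to identify $\rho^2$ with the left-hand side of \eqref{conc} and because a component of $c$ along $U_2$ would make $\mathfrak{B}$ strictly convex along that null direction, so the lemma would fail without the structure of \eqref{Hc}; it shows that \eqref{conc} is exactly the set on which the Hessian is negative semidefinite rather than merely a sufficient condition; and it avoids the (harmless here, but unaddressed) point that concavity of a radial function requires the tangential curvature $f'(\rho)/\rho$ to be non-positive as well as $f''(\rho)$. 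The paper's change of variables is shorter and more geometric. The only small addition I would ask of you is to note that the sublevel set \eqref{conc} is convex (being a sublevel set of a convex quadratic, since $H \succeq 0$), so that pointwise negative semidefiniteness of the Hessian on that set does imply concavity there.
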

\begin{proof}
The Bhattacharyya coefficient can be represented as
\begin{equation}
\mathfrak{B}(\mathbf{u}_k) = a \exp(-\rho^2)
\label{secondb}
\end{equation}
with 
\begin{equation}
\begin{aligned}
\rho^2 &= g^\top g\\
g &= \Lambda_1^{\frac{1}{2}} U_1^\top \mathbf{u}_k + \frac{1}{2} \Lambda_1^{-\frac{1}{2}} U_1^\top c\\
a &= \exp \left( \frac{1}{4} c^\top U_1 \Lambda_1^{-1} U_1^\top c - h \right).
\end{aligned}
\label{defs}
\end{equation}
The second derivative of \eqref{secondb} with respect to the radius $\rho$ is non-positive for $\rho^2 \leq \frac{1}{2}$, implying that the concave part of \eqref{firstb} is the domain where \eqref{conc} is satisfied.
\end{proof}
As illustrated in Fig. \ref{concavity_plot}, Lemma \ref{domconcavity} implies that concavity is likely in the challenging conditions when the model differences indicated by $\Gamma^{[01]}$ in \eqref{OmGamZet} are small and the noise contribution indicated by $R$ is large. The term $c$ grows with differences in state estimates per model. As long as the input remains small enough, and the candidate models do not differ significantly in resonance frequencies, the term $c$ will remain small and the optimization problem is likely to be completely in the domain of concavity. Note that these conditions are again the most challenging for discrimination of models, whereas minimization of the Bhattacharyya coefficient is a concave problem for these conditions, which is computationally tractable for multiple types of constraints.

\begin{figure}[ht]
\centering
\includegraphics[width=0.9\columnwidth]{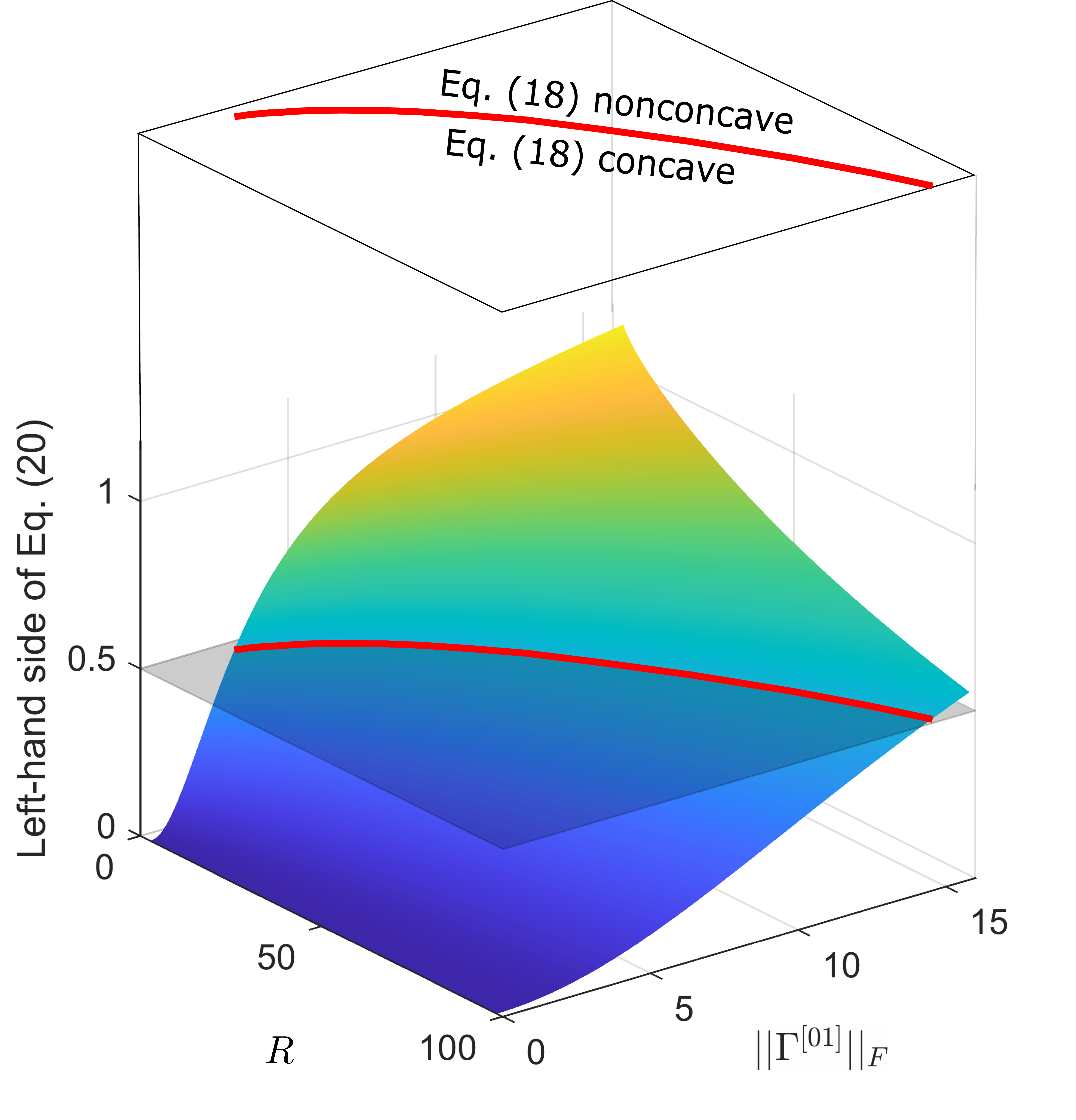}
\caption{The compliance of \eqref{conc} against measurement noise variance $\mathtt{R}$ and model differences $\mathtt{\norm{\Gamma^{[01]}}_F}$, generated with initial conditions, horizon $N$ and model $\mathtt{M^{[0]}}$ as in Sect. \ref{clamduncontrolled}, $\mathtt{C^{[1]}} = \text{constant} \cdot \mathtt{C^{[0]}}$, and with inputs $\mathbf{u}_k =  \mathtt{[-1,\,-1,\,-1,\,-1,\,1,\,1,\,-1,\,-1,\,0,\,0]^\top}$. Large noise contribution and small model differences are favorable for satisfying \eqref{conc}.}
\label{concavity_plot}
\end{figure}

\subsection{Online check for concavity of \eqref{optprob}} \label{onlinecheck}
\subsubsection{For polytopic constraints}\label{checkpoly}
To see whether \eqref{optprob} is concave in the case of polytopic constraints, it is sufficient to check whether all vertices of the constraint set satisfy \eqref{conc}, for each model combination $ij$.

\subsubsection{For energy constraints}\label{checkenergy}
 The input energy constraint set has the form
\begin{equation}
    \mathcal{U} = \left\{\mathbf{u}_k \;\Big|\; \norm{u_\ell}_2^2 \leq \varepsilon , \quad k+1 \leq \ell \leq k+N \right\}.
    \label{energyconstr}
\end{equation}
If the input sequence $\mathbf{u}_k = 0$ satisfies \eqref{conc}, it can be deduced that any point $\mathbf{u}_k$ for which $\norm{\mathbf{u}_k}_2 \leq \norm{z^*}_2$, with
\begin{equation}
\begin{aligned}
z^* = \arg &\min_z z^\top z \\
&\text{s.t. } z^\top H z + c^\top z + \frac{1}{4} c^\top U_1 \Lambda_1^{-1} U_1^\top c = \frac{1}{2}
\end{aligned}
\label{min}
\end{equation}
is within the concave domain of \eqref{firstb}. This optimization can be solved using only small computational effort as is shown in Appendix \ref{concavitysol}.
Since there is an input energy constraint for each time step in the horizon, 
$$\norm{u_{k+1}}_2^2 + \norm{u_{k+2}}_2^2 + \hdots + \norm{u_{k+N}}_2^2 \leq  \varepsilon N$$
holds and therefore the optimization problem will be in the domain of concavity of \eqref{optprob} if $\sqrt{\varepsilon N} \leq \norm{z^*}_2$ is satisfied
for each model combination $ij$.

These tests for concavity are performed each time step before the input determination. The assessment of concavity prior to the full closed-loop discrimination experiment requires closer investigation in future research. This deficiency however does not prevent the closed-loop approach from diagnosing the true candidate model. Instead, it is only uncertain whether the solution found each time step is the actual optimum of \eqref{optprob}.

\subsection{Quadratic Taylor approximation of \eqref{optprob}} \label{taylorsect}
The quadratic Taylor approximation of the Bhattacharyya coefficient around $\mathbf{u}_k = 0$ is
\begin{equation}
\begin{aligned}
T_\mathfrak{B}(\mathbf{u}_k | 0) = \mathfrak{B}(0) \left( \frac{1}{2} \mathbf{u}_k^\top \left(c c^\top - 2H \right) \mathbf{u}_k - c^\top \mathbf{u}_k+1\right).
\end{aligned}
\label{taylor}
\end{equation}
The Taylor approximation of the sum of weighted coefficients, such as the right-hand side of \eqref{bound}, can be obtained by taking the weighted sum of individual Taylor approximations. This means that minimization \eqref{optprob} is approximated by
\begin{equation}
\begin{aligned}
\mathbf{u}_k^* = \arg \min_{\mathbf{u}_k \in \mathcal{U}} \sum_{\ell = k+1}^{k+N} \sum_i \sum_{j>i} \sqrt{P_{k}(M^{[i]}) P_{k}(M^{[j]})} T_\mathfrak{B}(u_{k+1:\ell} | 0).
\end{aligned}
\label{taylorapprox}
\end{equation}
 Note that this quadratic approximation is significantly different from simply taking the sum of Bhattacharyya distances as in \eqref{paulson}. It is more reasonable, since the error bound in \eqref{bound} is now appropriately approximated, rather than only disregarding the model probabilities and base of exponentials as is done in \eqref{paulson}.
 
 \subsection{Summary of approaches}\label{summary}
Four MPC-based approximations and one open-loop (OL) approach of \eqref{stoc} will be compared. The Bhattacharyya Distance (BD) approach is reproduced from \cite{Paulson2018} and solves \eqref{paulson} in a receding horizon manner. The Bhattacharyya Coefficient (BC) approach solves \eqref{optprob2} instead, given that the problem is within the domain of concavity. The validity of the optimum can be checked as described in Sect. \ref{onlinecheck}. The Summed Bhattacharyya Coefficient ($\Sigma$BC) takes multiple time steps into account by solving \eqref{optprob}, again while considering the domain of concavity. Next, the Quadratic Taylor Approximation (QTA) approach solves \eqref{taylorapprox}, yet only considering the error bound at time step $k+N$ in order to make a fair comparison with BD and BC. Lastly, the OL approach minimizes \eqref{optprob2} offline (i.e., only at time step $k=0$) with very large horizon $N$.

\section{Simulation Experiment}\label{simulation}
The proposed approaches are first tested in a setting with a polytopic constraint set on the inputs, as is done in \cite{Paulson2018}. Moreover, a Monte-Carlo simulation is conducted in order to assess the performance more generally. Additionally, they will be tested in a case with quadratic input constraints, aided by the DCCP toolbox \cite{Shen2016}. A final simulation in Sect. \ref{clamdcontrolled} demonstrates how the methodology can be applied to a feedback controlled system.

The approaches were tested on the simulation setup given in \cite{Paulson2018}. The performances of the approaches were very similar to each other due to the chosen set of candidate models and small noise contribution, such that a well-designed input was not crucial for discrimination. In this section, we therefore created a more challenging case study, namely with multiple groups of close candidate models and large noise contribution.

\subsection{Closed-loop diagnosis of an uncontrolled system}\label{clamduncontrolled}
The candidate models in the simulation are constructed from a continuous-time harmonic oscillator with resonance $\pi/2$, damping $0.1$ and discretized with sampling time $1$, such that the matrices\footnote{Note that these values are rounded. Simulation results using rounded values may differ from the results presented in this manuscript. Besides, note that extension of the presented methodology to time-variant state-space matrices and model-dependent noise characteristics is straightforward.}
\begin{equation}
\begin{aligned}
A^{[i]} &=  \begin{bmatrix}
-0.0792 + \Delta^{[i]} & -0.6746 \\ 1.0936 & 0.0926
\end{bmatrix}, \\
B^{[i]} &= \begin{bmatrix}
0.2734 & 1.5700 - \delta^{[i]} \\ 0.3677 & 0
\end{bmatrix}\\
\Delta &= \{0,\,0.2,\,0.4,\,1.0,\,1.1\}\\
\delta &= \{0,\,0.1660,\,0.3319,\,0.8297,\,0.9127\}\\
\end{aligned}
\label{modelchoice1}
\end{equation}
variable per model, and
\begin{equation}
\begin{aligned}
C^{[i]} = C = \begin{bmatrix}
0 & 1 \\ 0.1 & 0.5
\end{bmatrix}
\end{aligned}
\label{modelchoice2}
\end{equation}
equivalent for each model, with in total five models $M^{[i]}$ constructed from taking $i=\{0,1,2,3,4\}$. The Gaussian noise contribution is relatively high with $Q=0.2 I_2$, $R=80 I_2$ and $S=0$. Given these quantities, a Kalman filter \cite{Verhaegen2007} is constructed for online estimation of the system state and output, which are normally distributed. The initial state is $\hat{x}_{0|-1} = \begin{bmatrix}
0 & 1
\end{bmatrix}^\top$ with covariance $\Xi_{0|-1}= 0.5 I_2$, and the initial probabilities are $P_{0}(M^{[i]}) = 0.2$ for each candidate model. For online determination of a separating input, the BD, QTA, BC and $\Sigma$BC approaches are implemented with receding horizon of length 5. Either if a model probability exceeds $1 - \epsilon = 0.98$, or if the number of measurements exceeds its maximum of 400, the discrimination experiment stops and the algorithm makes a decision about the underlying model of the system.

The OL approach uses a horizon of 200 time steps. The optimum is found as in \cite{Blackmore2006} using sequential quadratic programming with 20 initializations. If a decision cannot be made after 200 time steps, the open-loop input is repeated.

For each experiment setting, the number of Monte-Carlo (MC) runs is 1000, with 200 per true candidate model, performed on an Intel i7-9750H CPU.

\subsubsection{Parameters for polytopic constraint set}
For the experiment with polytopic constraint set, the input is restricted to
\begin{equation}
\begin{aligned}
\mathcal{U} = \Big\{\mathbf{u}_k \;\Big|\; &\norm{u_\ell}_\infty \leq 2, \quad\\ &\norm{u_\ell - u_{\ell-1}}_\infty \leq 1 , \quad  k+1 \leq \ell \leq k+N \Big\}
\end{aligned}
\nonumber
\end{equation}
After a check of concavity as described in Sect. \ref{checkpoly}, the optima for problems \eqref{optprob}, \eqref{optprob2}, \eqref{paulson} and \eqref{taylorapprox} (for the $\Sigma$BC, BC, BD and QTA approach, respectively) are found using an exhaustive search over the vertices.

\subsubsection{Parameters for quadratic constraint set}
The experiment with quadratic constraint set has restriction \eqref{energyconstr} with $\varepsilon = 2$.
With a positive check for concavity as explained in Sect. \ref{checkenergy}, the minimization problems \eqref{optprob}, \eqref{optprob2}, \eqref{paulson} and \eqref{taylorapprox} are solved using DCCP \cite{Shen2016}.

\subsubsection{Results}
The results for the polytopic and quadratic constraint set are presented as violin plots \cite{Hintze1998,Bechtold2016} in Figs. \ref{violin_polytopic} and \ref{violin_quad1}, respectively. According to the online check as described in Sect. \ref{onlinecheck}, all computations were done in the domain of concavity of \eqref{optprob} for BC and $\Sigma$BC in the experiment with polytopic constraint set. With the quadratic constraint set, at least $59.9$\% of the measurements yielded a domain where \eqref{optprob2} (or \eqref{optprob} for $\Sigma$BC) is concave. In the other cases, a discriminating input will still be found, but without the guarantees of DCCP. As explained in Sect. \ref{bhatdomain}, in these cases it is less crucial for the input to be optimally discriminating.

\begin{figure}[t]
\centering
\includegraphics[width=1\columnwidth]{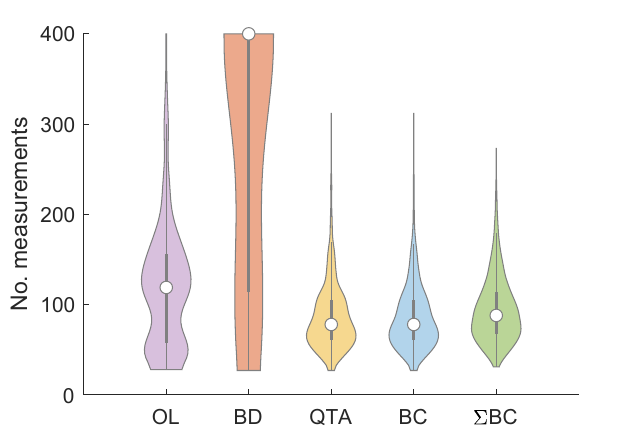}
\caption{Number of measurements before decision in experiment with polytopic constraints for the four closed-loop methods, compared to open-loop. The medians are from left to right $\mathtt{\{119,\, 400,\, 78,\, 78,\, 88\}}$ and the distributions of all methods differ significantly with MWW  approximated $p$-value $\mathtt{< 0.001}$, except the distribution pair $(\text{QTA},\text{BC})$. The average computational time per measurement for the closed-loop methods was $\mathtt{\{10.8,\, 12.4,\, 26.1,\, 89.1\}}$ milliseconds for BD, QTA, BC and $\Sigma$BC, respectively.}
\label{violin_polytopic}
\end{figure}
\begin{figure}[t]
\centering
\includegraphics[width=1\columnwidth]{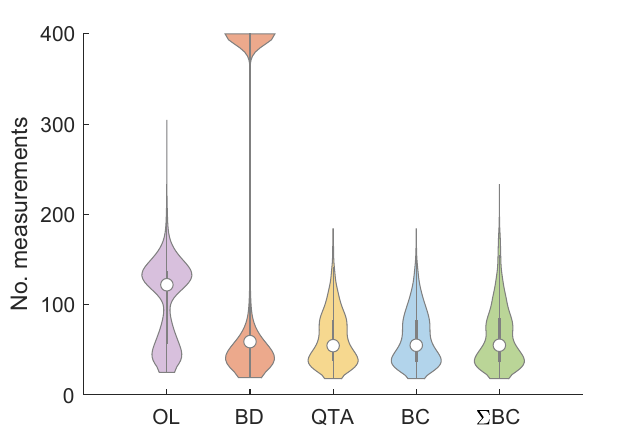}
\caption{Number of measurements before decision in experiment with quadratic constraints for the four closed-loop methods, compared to open-loop. The medians are from left to right $\mathtt{\{122,\, 59,\, 54.5,\, 55,\, 55\}}$. Although the distributions of QTA, BC and $\Sigma$BC do not differ significantly from each other, they do differ significantly from OL and BD with MWW approximated $p$-value $\mathtt{< 0.001}$. The average computational time per measurement for the closed-loop methods was $\mathtt{\{223,\, 175,\, 310,\, 928\}}$ milliseconds for BD, QTA, BC and $\Sigma$BC, respectively.}
\label{violin_quad1}
\end{figure}

\begin{figure}[ht]
\centering
\includegraphics[width=0.99\columnwidth]{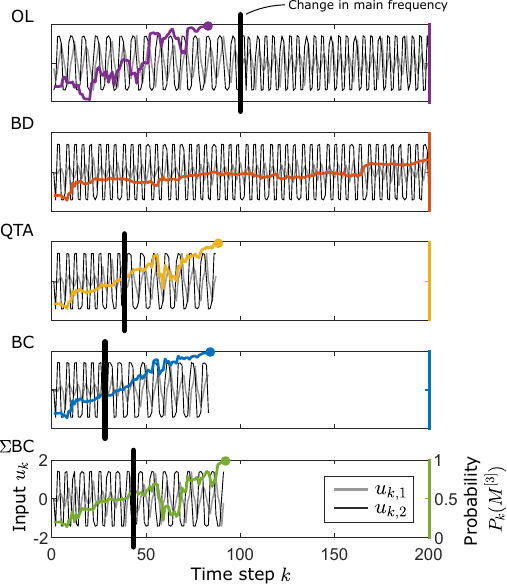}
\caption{Applied inputs $u_k = [u_{k,1}, u_{k,2}]^\top$ for one realization of the approaches summarized in Sect. \ref{summary} in quadratic constraint set, with corresponding probability of true model $\mathtt{M^{[3]}}$. The black vertical bars indicate time instances at which the input pattern changes significantly in main frequency. The colored circles indicate time instances when the final decision is made. The BD approach did not present the change in main frequency and failed to decide after 200 time steps.}
\label{realization3}
\end{figure}

Figs. \ref{violin_polytopic} and \ref{violin_quad1} show that in both cases with polytopic or quadratic constraints, the QTA, BC and $\Sigma$BC approaches require fewer measurements for diagnosis with confidence $1-\epsilon$ than OL and BD. 
The OL approach shows in both experiments a bimodal distribution for the required number of measurements. This is caused by the composition of candidate models, of which three models $M^{[0]}$, $M^{[1]}$ and $M^{[2]}$ have high resonance frequency, whereas the other two models $M^{[3]}$ and $M^{[4]}$ have relatively low resonance frequency. This implies that the input frequency should be either low or high to separate these respective models. Since the OL approach has a predetermined input sequence for discriminating all candidate models within a certain time span, it will invariably first start with one frequency for a fixed amount of time, after which it applies the other frequency for another fixed period. Therefore, depending on the resonance frequency of the true system, the diagnosis will be either early or late.

Fig. \ref{realization3} presents one realization of the closed-loop diagnosis approaches in the quadratic constraint set. The upper plot illustrates the latter phenomenon for the OL approach. As the black vertical bar indicates, the input frequency only increases after about 100 time steps, no matter what the system output is. In this realization, the true candidate model is a low-frequency model. Therefore the OL approach diagnoses it relatively quickly. Contrarily, the diagnosis in this experiment will be slower if the true model has high resonance frequency.

As opposed to open-loop, the closed-loop approaches do consider online measurements for updating the system inputs. Therefore, the QTA, BC and $\Sigma$BC approaches change the main input frequency as soon as the model probabilities `suggest' this, as can be seen in the bottom plots in Fig. \ref{realization3}. Although the distributions between QTA, BC and $\Sigma$BC do not differ significantly in Fig. \ref{violin_quad1}, the bottom plot in Fig. \ref{realization3} shows that individual realizations of these approaches can actually differ.

Interestingly, in many cases the BD approach fails to diagnose with $1-\epsilon$ confidence, even after 400 time steps. This is mainly due to disregarding the model probabilities in the optimization problem. As a result, it keeps trying to separate all five models simultaneously, while at some point several candidate models may become irrelevant due to low probability of being the true model. Another reason is that, even with equal model probabilities, optimization problem \eqref{paulson} is by no means an approximation to minimization of the original error bound \eqref{optprob2} for the fact that minimizing the sum of exponents produces a different result than minimizing the sum of exponentials.

In terms of computational time, BD, QTA and BC are comparable. However, there seems to be a preference to BD and QTA, which have quadratic objective functions. The $\Sigma$BC approach generally uses more computational time, which is reasonable since it has more terms in the summation in the objective function.

\subsection{Closed-loop diagnosis of a feedback controlled system} \label{clamdcontrolled}
The closed-loop procedure for diagnosing faults can also be applied to feedback controlled systems as illustrated in Fig. \ref{scheme} and described in Appendix \ref{FBC}. For this simulation experiment, the structure of the open-loop candidate models $\tilde{A}^{[i]},\tilde{B}^{[i]},\tilde{C}^{[i]}$ are chosen similar to \eqref{modelchoice1} and \eqref{modelchoice2} with
\begin{equation}
\begin{aligned}
    \Delta &= \{2,\,2.01,\,2.02,\,2.03,\,2.04\}\\
    \delta &= \{1.6594,\,1.6677,\,1.6760,\,1.6843,\,1.6926\}\\
\end{aligned}
\end{equation}
and
\begin{equation}
    \tilde{Q}= I_2 \times 10^{-4}, \qquad \tilde{R}= I_2 \times 10^{-2}, \qquad \tilde{S} = 0.
\end{equation}
These candidate models are stabilized by designing a controller for the nominal system with observer gain $K^{[0]}$ the Kalman gain and with feedback gain $F^{[0]}$ such that $\tilde{A}^{[0]} - \tilde{B}^{[0]} F^{[0]}$ has eigenvalues at $0.94$ and $0.95$. The feedforward gain $G^{[0]}$ is chosen such that the input $u_k$ to the feedback controlled system acts as a reference for the system output.

In this proof-of-concept demonstration we regard a quadratic constraint set
\begin{equation}
    \mathcal{U} = \left\{\mathbf{u}_k \;\Big|\; \norm{u_\ell - r_\ell}_2^2 \leq \varepsilon , \quad k+1 \leq \ell \leq k+N \right\}.
\end{equation}
such that the input $u_k$ is bounded around a given reference signal $r_k = [3, 5]^\top$ $\forall k$ with $\varepsilon = 2.5\times10^{-3}$. With the straightforward translation $\mathbf{u}_k' = \mathbf{u}_k - \mathbf{r}_k$ this constraint set is equivalent to \eqref{energyconstr} such that the methodology presented in Sect. \ref{methodology} is still applicable to this case. The closed-loop approach BC is implemented with a horizon of $N=5$.

The results are presented in Fig. \ref{realization_FB}. According to the online check presented in Sect. \ref{onlinecheck}, all computations were done in the concave domain of the Bhattacharyya coefficients. From Fig. \ref{realization_FB} it can be concluded that it is impossible to reliably diagnose the model without using an auxiliary input. With the closed-loop approach BC however, an input is computed with a user-defined energy limit so that the system is minimally disturbed while still being able to diagnose the correct model after 400 time steps.

\begin{figure}[bh]
\centering
\includegraphics[width=0.75\columnwidth]{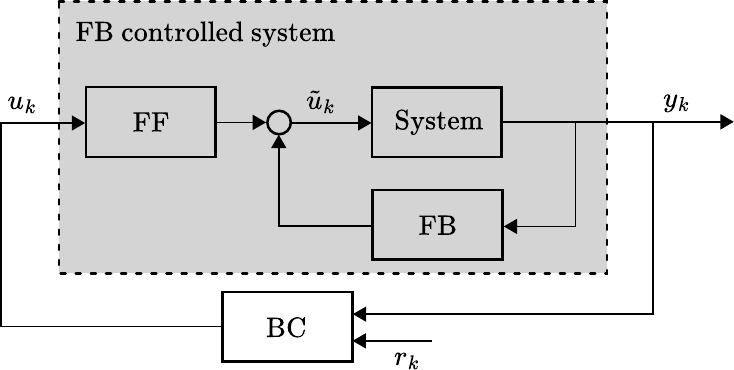}
\caption{Online input design for discrimination of models for a feedback (FB) controlled system with observer, feedforward (FF) and feedback gains $K^{[0]}$, $G^{[0]}$ and $F^{[0]}$, respectively. The online input design method BC can also be replaced with one of the other methods proposed in this manuscript $\Sigma$BC or QTA.}
\label{scheme}
\end{figure}

\begin{figure}[bh]
\centering
\includegraphics[width=0.99\columnwidth]{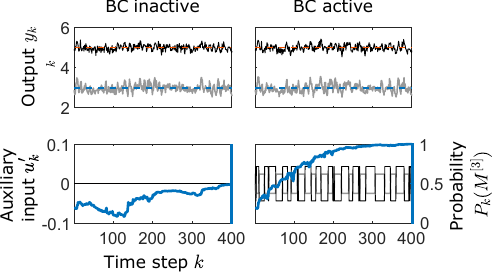}
\caption{A feedback controlled system with and without auxiliary input generated by BC (right and left, respectively). The system outputs $y_k$ (top) approximately track a reference $r_k = [3,5]^\top$ (blue and red dashed, respectively) even though the auxiliary input $u_k'=u_k-r_k$ is applied. However, the correct model could be diagnosed with probability $P_{400}(M^{[3]}) \approx 1$ using BC, whereas this is not possible with $u_k'=0$ $\forall k$.}
\label{realization_FB}
\end{figure}

\section{Conclusion}\label{conclusion}
This manuscript improves the trade-off between the three factors high accuracy, small number of consecutive measurements and low computational effort, with respect to existing stochastic CLAFD methods such as \cite{Paulson2018}. The proposed approach considers the weighted sum of Bhattacharyya coefficients as bound on the error probability, which is shown to be concave in the case of close linear candidate models with significant Gaussian process and measurement noise. In addition, a quadratic Taylor approximation of the error bound is proposed for a further speedup of the computations without significantly sacrificing the discrimination performance.
Simulation results show that the newly proposed approaches require a smaller number of measurements than the open-loop approach, whereas the approach in \cite{Paulson2018} frequently fails to decide at all with the predefined confidence. By additionally considering a similar computational effort, it can be concluded that the newly proposed approaches have favorable trade-off with respect to the state-of-the-art closed-loop method in \cite{Paulson2018}.

In order to further improve the online input design, future research is suggested to focus on minimizing the actual error probability instead of only the bound. The approach proposed in this manuscript can be used to initialize this nonlinear optimization problem.

\appendix

\section{Applicability for feedback controlled systems} \label{FBC}
Consider the open-loop candidate models
\begin{equation}
M^{[i]}: \left\{ \begin{array}{rl} x_{k+1} &= \tilde{A}^{[i]} x_{k} + \tilde{B}^{[i]} \tilde{u}_k + \tilde{w}_k\\
y_k &= \tilde{C}^{[i]} x_{k}  +  \tilde{v}_k \end{array}\right.
\label{OLmodels}
\end{equation}
with the joint covariance matrix of the noise sequences
\begin{equation}
    \mathbb{E}\left[\begin{bmatrix}
        \tilde{v}_k \\ \tilde{w}_k
    \end{bmatrix} \begin{bmatrix}
        \tilde{v}_\ell^\top & \tilde{w}_\ell^\top
    \end{bmatrix}\right] = 
    \left\{ \begin{array}{ll} \begin{bmatrix}
        \tilde{R} & \tilde{S}^\top \\ \tilde{S} & \tilde{Q}
    \end{bmatrix} & \text{if } k = \ell, \\
    0 & \text{otherwise.}
    \end{array} \right.
    \label{OLmodelsnoise}
\end{equation}
A feedback controller is constructed for the nominal model 
\begin{equation}
    \tilde{u}_k = -F^{[0]} \hat{x}_{k|k-1}^{[0]} + G^{[0]} u_{k}
\end{equation}
with $F^{[0]}$ and $G^{[0]}$ the feedback and feedforward gains. The state estimate is given by
\begin{equation}
    \hat{x}_{k+1|k}^{[0]} = \tilde{A}^{[0]}\hat{x}_{k|k-1}^{[0]} + \tilde{B}^{[0]} \tilde{u}(k) + K^{[0]} \Big(y(k) - \tilde{C}^{[0]} \hat{x}_{k|k-1}^{[0]}\Big)
\end{equation}
with $K^{[0]}$ the observer gain.
The resulting closed-loop dynamics of the candidate models are described by \eqref{models} and \eqref{modelsnoise} with
\begin{equation}
    \begin{aligned}
        A^{[i]} &= \begin{bmatrix}
            \tilde{A}^{[i]} & -\tilde{B}^{[i]} F^{[0]} \\
        K^{[0]} \tilde{C}^{[i]} & \tilde{A}^{[0]} - \tilde{B}^{[0]} F^{[0]} - K^{[0]} \tilde{C}^{[0]} 
    \end{bmatrix},\\
        B^{[i]} &= \begin{bmatrix}
            \tilde{B}^{[i]} \\
            \tilde{B}^{[0]}
        \end{bmatrix} G^{[0]}, \qquad C^{[i]} = \begin{bmatrix}
            \tilde{C}^{[i]} & 0
        \end{bmatrix}, \\
        Q &= \begin{bmatrix}
            \tilde{Q} & \tilde{S}(K^{[0]})^\top \\
            K^{[0]} \tilde{S}^\top & K^{[0]} \tilde{R} (K^{[0]})^\top
        \end{bmatrix}, \\
        S &= \begin{bmatrix}
            \tilde{S} \\ K^{[0]} \tilde{R}
        \end{bmatrix}, \qquad\qquad R = \tilde{R}.
    \end{aligned}
\end{equation}

\section{Definitions for \eqref{bdist}} \label{definitions}
Whereas the definitions of $H^{[ij]}$ and $c^{[ij]}$ are given in \eqref{Hc}, the variable $h^{[ij]}$ is defined as
\begin{equation}
    \begin{aligned}
    h^{[ij]} &= \frac{1}{4} (\zeta^{[ij]})^\top (\Omega^{[ij]})^{-1} \zeta^{[ij]}\\
    &\qquad+ \frac{1}{2} \log\left(\frac{\left|\frac{1}{2} \Omega^{[ij]} \right|}{\sqrt{\left|\mathbf{\Sigma}_{k|k}^{[i]}\right|\left|\mathbf{\Sigma}_{k|k}^{[j]}\right|}} \right)
    \end{aligned}
    \label{defh}
\end{equation}
with $|\cdot|$ indicating the determinant operator.

The boldface matrices and Toeplitz matrix in \eqref{OmGamZet} are given by the state-space matrices in \eqref{models} as (with indices $i$ omitted for clarity)
\begin{equation}
    \begin{aligned}
    \mathbf{A} &= \begin{bmatrix} I\\A\\\vdots\\A^{N-1}    \end{bmatrix}, \qquad \mathbf{B} = I_N \otimes B,\qquad \mathbf{C} = I_N \otimes C,\\
    \mathcal{T}_{A} &= \begin{bmatrix} 
    0 & 0 & \hdots & 0 & 0\\
    I & 0 & \hdots & 0 & 0
    \\A & I & \ddots & \vdots & \vdots \\
    \vdots &\ddots & \ddots & 0 & 0
    \\A^{N-2} & \hdots & A & I & 0  \end{bmatrix}
    \end{aligned}
    \nonumber
\end{equation}
with $I_N$ the $N$-by-$N$ identity matrix and $\otimes$ the Kronecker product. 
Moreover, the equations for models \eqref{models} can be expanded to
\begin{equation}
    \begin{aligned}
    \mathbf{x}_{k} &= \mathbf{A} x_{k+1} + \mathcal{T}_{A}\mathbf{B} \mathbf{u}_k + \mathcal{T}_{A}\mathbf{w}_k\\
    \mathbf{y}_{k} &= \mathbf{C} \mathbf{x}_{k} + \mathbf{v}_k
    \end{aligned}
\end{equation}
with $\mathbf{w}_k \sim \mathcal{N}(0,\mathbf{Q})$ and $\mathbf{v}_k \sim \mathcal{N}(0,\mathbf{R})$.
Given the joint covariance matrix in \eqref{modelsnoise}, the extended covariance matrices are defined as
\begin{equation}
\begin{aligned}
    \mathbf{Q} &= I_N \otimes Q, \quad \mathbf{R} = I_N \otimes R, \\
    \mathbf{S} &= \mathbb{E}\left[\mathbf{w}_k \mathbf{v}_{k+1}^\top\right] = \begin{bmatrix}
        0 & 0 & \hdots & 0\\
        S & 0 & \hdots & 0\\
        \vdots & \ddots & \ddots & \vdots \\
        0 & \hdots & S & 0
    \end{bmatrix}.
\end{aligned}
\end{equation}

The predicted output sequence with corresponding covariance matrix is
\begin{equation}
    \begin{aligned}
    \hat{\mathbf{y}}_{k|k} &= \mathbf{C} \mathbf{A} \hat{x}_{k+1|k} + \mathbf{C}\mathcal{T}_{A} \mathbf{B} \mathbf{u}_k\\
    \mathbf{\Sigma}_{k|k} &= \mathbf{C} \mathbf{A} \Xi_{k+1|k} \mathbf{A}^\top \mathbf{C}^\top + \mathbf{C} \mathcal{T}_{A} \mathbf{Q} \mathcal{T}_{A}^\top \mathbf{C}^\top \\
    & \quad + \mathbf{C} \mathcal{T}_A \mathbf{S} + \mathbf{S}^\top \mathcal{T}_A^\top \mathbf{C}^\top  + \mathbf{R} 
    \end{aligned}
\end{equation}
with $\Xi_{k+1|k}$ the state-error covariance matrix corresponding to state estimate $\hat{x}_{k+1|k}$. The covariance matrix $\mathbf{\Sigma}_{k|k}$ is then used in \eqref{OmGamZet} and \eqref{defh} to determine the Bhattacharyya distance in \eqref{bdist}.

\section{Solution of \eqref{min}} \label{concavitysol}
With the definitions in \eqref{defs}, optimization \eqref{min} can be rewritten as
\begin{equation}
\begin{aligned}
g^* = \arg \min_g \; &g^\top \Lambda_1^{-1} g - c^\top U_1 \Lambda_1^{-\frac{3}{2}} g\\
\text{s.t. } &g^\top g = \frac{1}{2},
\end{aligned}
\end{equation}
which in turn is equivalent to
\begin{equation}
\begin{aligned}
q^* = \arg \min_q \; &q^\top \Lambda_1^{-1} q + b^\top q\\
\text{s.t. } &q^\top q = \frac{1}{2},
\end{aligned}
\label{eqprob}
\end{equation}
with $\Lambda_1$ a diagonal matrix with elements $\lambda_1 \geq \lambda_2 \geq \hdots \geq \lambda_n 
> 0$ and $b \geq 0$. The Lagrangian of \eqref{eqprob} equals
\begin{equation}
\mathcal{L}(q,\tau) = q^\top \Lambda_1^{-1} q + b^\top q - \tau \left(q^\top q - \frac{1}{2}\right).
\end{equation}
The solution is given by the stationary point of the Lagrangian and should 
therefore satisfy
\begin{equation}\label{eq: Lagrangian}
	\left\{
	\begin{aligned}
	q_1 &= \frac{b_1}{2 \left(\tau - \lambda_1^{-1} \right)}\\
	&\vdots\\
	q_n &= \frac{b_n}{2 \left(\tau - \lambda_n^{-1} \right)}\\
	q^\top q &= \frac{1}{2}
	\end{aligned}
\right. .
\end{equation}
with $b_i,q_i$ the $i$th element in $b$ or $q$, respectively. This system of equation can be seen as intersection 
of a parametrically defined critical curve $q(\tau)$ defined by its first 
$n$ equations with the sphere $\mathcal{B}$ of 
radius $1/\sqrt{2}$.
On the other hand, it can be rewritten into one polynomial equation in 
variable $\tau$ of degree $2n$, and there exist no more than $2n$ 
intersection points corresponding to the polynomial roots, which can be found 
numerically by  root-finding methods. 
For each of the intersection points, the objective function of \eqref{eqprob} 
can be calculated and the point giving the minimal value provides the 
solution  of \eqref{eqprob}.
The searching range of $\tau$ for the root-finding methods can be reduced 
significantly by using the following considerations. 

Due to the symmetry of the constraints of \eqref{eqprob} and $b$ being 
non-negative, the solution of \eqref{eqprob} lies in the negative orthant, and 
therefore $\tau \in (- \infty, \lambda_1^{-1} ).$
On this interval, each of the first $n$ equations of \eqref{eq: Lagrangian} is 
negative, continuous, and monotonically decreasing, and thus the point 
$q(\tau)$ on the 
critical curve  goes out of the origin (at $\tau = -\infty)$ with 
continuously 
and monotonically increasing distance to the origin, with $|q(\lambda_1^{-1})| = 
\infty.$
Therefore, the range of $\tau$ can be reduced to that part of the curve 
which lies outside the inscribed hypercube in $\mathcal{B}$ and inside the 
circumscribed hypercube around $\mathcal{B}$.
The inscribed hypercube is given by conditions $\{|q_i|\leq  1/\sqrt{2n}, \  
i=1,\ldots n\}$, and thus for $$\tau \leq \tau_- 
\stackrel{\text{def}}{=}\min_i \lambda_i^{-1} - \sqrt{\frac{n}{2}} b_i  
,$$ the critical curve is still inside the inscribed hypercube.
Similarly, the circumscribed hypercube is given by  $\{|q_i| \leq 1/\sqrt{2}, 
\  i=1,\ldots n\}$ and the curve is still inside it for $$\tau \leq 
\tau_+ 
\stackrel{\text{def}}{=}\min_i \lambda_i^{-1} - \frac{1}{\sqrt{2}} b_i. $$
The root-searching method can be thus reduced to the interval $[\tau_-, 
\tau_+]$.
Finally, because $|q(\tau)|$ is continuous and monotone on this interval, 
the solution exists and is unique.

\end{document}